\documentclass[11pt,letterpaper]{article}
\usepackage[T1]{fontenc}
\usepackage{graphicx}

\usepackage{geometry}
\usepackage{amsmath,amssymb,amsthm}
\usepackage{hyperref}
\usepackage{xcolor}

\theoremstyle{plain}
\newtheorem{theorem}{Theorem}
\newtheorem{proposition}{Proposition}
\newtheorem{lemma}{Lemma}
\newtheorem{corollary}{Corollary}
\theoremstyle{definition}
\newtheorem{definition}{Definition}
\newtheorem{remark}{Remark}

\usepackage{titling}
\usepackage[noblocks]{authblk}

\usepackage[square,numbers]{natbib}
\usepackage{microtype}
\usepackage[tt=false]{libertine}
\usepackage{bbm}

\begin{document}
	
	\title{Revisiting the Fairness of Pure Nash Equilibria of Round-Robin}

	\author[1,2]{Georgios Amanatidis}
	\affil[1]{Department of Informatics, Athens University of Economics and Business, Greece.}
	\affil[2]{Archimedes / Athena RC, Greece.}
	
	\author[3]{Haris Aziz}
	\affil[3]{School of Computer Science and Engineering, UNSW Sydney, Australia.}
	
	\author[4]{Georgios Birmpas}
	\affil[4]{Department of Computer Science, University of Liverpool, UK.}
	
	\author[5]{Federico Fusco}
	\affil[5]{Department of Computer, Control, and Management Engineering ``Antonio Ruberti'', Sapienza University of Rome,  Italy.}
	
	\author[5]{Stefano Leonardi}
	
	\author[6]{Rebecca Reiffenh{\"a}user}
	\affil[6]{Institute for Logic, Language and Computation, University of Amsterdam, The Netherlands.  }

	\predate{}
	\postdate{}
	\date{}

	\maketitle

	\begin{abstract}
We prove that for additive valuations, not every pure Nash equilibrium of the Round-Robin mechanism induces an allocation that is envy-free up to one good (EF1) with respect to the agents’ true valuations. This corrects an earlier claim of Amanatidis et al. (WINE 2021; Math. Oper. Res. 2024). On the positive side, every pure Nash equilibrium of every fixed recursively balanced picking sequence mechanism induces an allocation that is $1/2$-EF1 and proportional up to one good (PROP1). We prove the approximate EF1 guarantee through a potential function comparing two executions, and the proportionality guarantee via a deviation that preserves the agent’s most valuable goods. We also give a family of Round Robin equilibria showing that it is not possible to have a strictly better guarantee than $2/3$-EF1 for all equilibria.
	\end{abstract}
	
	\begin{keyword}
		Fair division; Round-Robin; Pure Nash equilibrium;
		Envy-freeness up to one good; Proportionality up to one good; Recursively balanced picking sequence
	\end{keyword}

	\section{Introduction}
	
	In the fair division of indivisible goods, envy-freeness up to one good
	(EF1) is one of the central fairness notions, and the Round-Robin
	algorithm~\citep{BoLa11a}, in which agents take turns picking their favourite remaining
	good, is the simplest way to achieve it: when agents report their
	valuations truthfully, Round-Robin always returns an EF1
	allocation~\cite{CKMPSW19}.
	
	Strategic agents, however, may misreport (see, e.g., \citep{KoCh71a,ABLM17a}). \citet{ABFLLR21,ABFLLR24} claim that, for additive nonnegative valuations, every pure Nash equilibrium of the Round-Robin reporting mechanism still induces an allocation that is EF1 with respect to the true valuations~\citep[Theorem~4]{ABFLLR24}.
	We give a counterexample under the reporting model stated in their paper.
	
	Despite the failure of exact EF1, we prove that two guarantees survive: every pure Nash equilibrium
	is $1/2$-EF1 and proportional up to one good (PROP1). Both statements
	hold for the larger class of fixed recursively balanced picking
	sequences~\citep{AWX15b}, in which the order of the agents may change between rounds.
	The $1/2$-EF1 guarantee is already known for Round-Robin as a special
	case of the result of Amanatidis et al. for subadditive cancelable
	valuations~\citep[Theorem~8]{ABLLR25}. We give a direct proof
	for additive valuations and extend the guarantee to every fixed
	recursively balanced sequence. The existence of a pure Nash equilibrium for any recursively balanced picking sequence mechanism follows from the bluff profile of \citet{AGW17}; see the discussion at the end of Section~\ref{sec:prop1}.
	Finally, we give a more elaborate family of Round-Robin equilibria
	showing that no factor strictly greater than $2/3$ can be guaranteed with respect to approximate EF1 for all equilibria.

	\subsection{Our contributions}
	
	We first show that exact EF1 need not hold at equilibrium, even with
	three agents and eight goods (Proposition~\ref{prop:non-ef1}).
	In our example, two agents value all
	goods equally, and the third uses only three distinct values. All
	reports are strict, so the example is independent of any tie-breaking rule used. 
	
	Our main positive result (Theorem~\ref{thm:guarantees}) combines
	$1/2$-EF1 and PROP1 for every fixed recursively balanced sequence.
	The pairwise guarantee follows from a
	potential function that compares the equilibrium execution with a deviation
	targeting another agent's bundle. It requires only that the agent
	whose envy we examine best responds to the other reports. The proportionality guarantee uses a different deviation, in which the agent first takes all goods that she values more than every good she does not receive, and then picks truthfully. Again, this only requires that her report is a best response.
	
	Finally, Theorem~\ref{thm:upper-bound} gives a family of Round-Robin
	equilibria ruling out every universal EF1 approximation factor
	greater than $2/3$. All agents except the one whose envy we examine have identical unit valuations, and the latter uses only three distinct values. The main step is to show that the agent whose envy we examine best responds, and this reduces to a counting argument: for every two goods that she takes from the bundle she envies, she loses at least three of her own goods.

	\section{Setting}\label{sec:setting}
	
	An instance consists of a set $N=\{1,\dots,n\}$ of agents, a set $M$ of
	indivisible goods with $m=|M|$, and, for each agent $i\in N$, an additive valuation
	$v_i$ with $v_i(g)\ge 0$ for every $g\in M$ and
	$v_i(S)=\sum_{g\in S}v_i(g)$ for every $S\subseteq M$. An
	\emph{allocation} $(A_1,\dots,A_n)$ is a partition of $M$, where $A_i$
	is the bundle of agent~$i$.
	
	
	\begin{definition}[$\alpha$-EF1]
		For $\alpha\in[0,1]$, an allocation $(A_1,\dots,A_n)$ is
		\emph{$\alpha$-EF1} if, for every pair $i,j\in N$ with
		$A_j\ne\emptyset$, there is a good $g\in A_j$ such that
		\[
		v_i(A_i)\ge\alpha\,v_i(A_j\setminus\{g\}).
		\]
		When $\alpha = 1$ we just say that the allocation is EF1.
	\end{definition}
	
	\begin{definition}[PROP1]
		An allocation $(A_1,\dots,A_n)$ is \emph{proportional up to one good} if,
		for every agent $i$, either $v_i(A_i)\ge v_i(M)/n$ or there is a good
		$g\in M\setminus A_i$ such that
		$v_i(A_i\cup\{g\})\ge v_i(M)/n$.
	\end{definition}
	
	For nonnegative additive valuations, an EF1 allocation is also PROP1. 
	
	\paragraph{Sequential mechanisms.}
	Each agent $i$ reports a bid vector $b_i\in\mathbb{R}^{M}_{\ge 0}$,
	which need not coincide with $v_i$. A picking sequence, fixed
	independently of the reports, specifies the active agent at each turn, and the active agent receives the available
	good with her highest reported bid; ties are broken lexicographically. The reports are submitted once, before the allocation begins, and remain fixed throughout the execution.
	A picking sequence is \emph{recursively balanced} if it consists of
	complete rounds, each of which is a permutation of $N$, followed
	possibly by an incomplete round in which no agent appears twice.
	Round-Robin repeats the order $(1,2,\dots,n)$, truncating the last
	round when all goods have been allocated.

	Under truthful reports, a recursively balanced sequence produces an EF1 allocation: for every $r$, the good that agent~$i$ picks at her $r$th turn is at least as valuable to her as the good that any other agent~$j$ picks at the $(r+1)$th turn of~$j$; so, removing the first good of $j$'s bundle eliminates the envy of agent~$i$ towards~$j$. Hence, it also produces a PROP1 allocation.
    For Round-Robin, see also~\citep{CKMPSW19}.

	Since the outcome depends on a report only through the strict ranking of goods that
	it induces after tie-breaking, and every strict ranking is induced by
	some bid vector, we identify each report with a strict ranking
	$\widehat{\succ}_i$ of $M$, as in~\cite{ABFLLR24}.
	
	\begin{definition}[Pure Nash equilibrium]
		A report profile $(\widehat{\succ}_1,\dots,\widehat{\succ}_n)$ is a
		\emph{pure Nash equilibrium} if no agent $i$ can obtain a bundle of
		strictly higher true value $v_i$ by unilaterally changing her report.
	\end{definition}
	
	\section{Not every pure Nash equilibrium induces an EF1 allocation}\label{sec:example}

	\begin{proposition}\label{prop:non-ef1}
		There is a three-agent, eight-good additive instance with strictly
		positive valuations and a pure Nash equilibrium of the Round-Robin
		reporting mechanism whose outcome is not EF1.
	\end{proposition}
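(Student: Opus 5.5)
The plan is to build an explicit instance in which the only agent with a nontrivial incentive problem is agent~$1$. Agents $2$ and $3$ will have identical unit valuations, $v_2(g)=v_3(g)=1$ for all $g\in M$. With eight goods and the order $(1,2,3)$, agents $1,2,3$ receive $3,3,2$ goods under every report profile. So agents $2$ and $3$ obtain the same true value whatever anyone reports, and any strict rankings $\widehat{\succ}_2,\widehat{\succ}_3$ are automatically best responses.

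This reduces the equilibrium check to agent~$1$, and it also fixes which envy can fail. Agent~$3$'s envy of a $3$-good bundle disappears after removing one good, so the violation must be agent~$1$ envying agent~$2$. Agent~$1$ receives a three-valued valuation: a few goods of high value $a$, some of medium value $b$, and the rest of low value $c$, with $a>b>c>0$. The reports $\widehat{\succ}_2,\widehat{\succ}_3$ are then tuned so that the high goods drift into agent~$2$'s bundle.

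Because the other reports are fixed, the execution is deterministic given agent~$1$'s report. Every sequence of three picks that agent~$1$ can make (choosing any available good at each of her turns $1,4,7$) is realised by some strict ranking, for instance by putting those goods first in that order. Agent~$1$'s best-response problem is therefore a finite optimisation over at most $8\cdot 6\cdot 3$ pick sequences. The proof will do three things:
\begin{enumerate}
\item fix the concrete rankings and values;
\item tabulate agent~$1$'s value for each relevant pick sequence (pruning dominated ones), and show that the intended equilibrium sequence attains the maximum;
\item in the equilibrium outcome $(A_1,A_2,A_3)$, check directly that $v_1(A_1)<v_1(A_2\setminus\{g\})$ for every $g\in A_2$.
\end{enumerate}
Finally I will note that all reports are strict, so the argument does not depend on the tie-breaking rule.

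The design idea for the rankings is as follows. Agent~$2$ ranks agent~$1$'s high-value goods near the top, so that after agent~$1$ grabs one, agent~$2$ takes another. Agent~$3$'s ranking creates a trap: if agent~$1$ tries to pre-empt agent~$2$ by taking a second high good early, agent~$3$ or agent~$2$ reacts by taking the medium goods that agent~$1$ would otherwise collect later. The intended effect is that each deviation which reduces agent~$2$'s haul of high goods costs agent~$1$ at least as much in medium goods. The numbers $a,b,c$ are then chosen so that the equilibrium value $v_1(A_1)$, say one $a$ and two $b$'s or similar, is a weak maximum. At the same time $A_2$ should contain two high goods plus something, so that removing any single good still leaves value above $v_1(A_1)$. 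This requires inequalities roughly of the form $a+x> v_1(A_1)$ with $x$ the third good of $A_2$, together with the deviation inequalities from step~2.

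The main obstacle is step~2: making sure no deviation of agent~$1$ does strictly better, while the outcome still violates EF1. These two requirements pull in opposite directions. Agent~$1$ is allowed to steal agent~$2$'s targets, and under truthful picking Round-Robin is EF1, so the rankings of agents $2,3$ must punish stealing precisely. I would first search over small choices: agent~$2$'s list starting with the high goods, agent~$3$'s list starting with the medium goods in a specific order, and values such as $a=3$, $b=2$, $c=1$ or slight perturbations. I would adjust until the case table in step~2 closes. If a direct choice fails, I would enumerate the rankings of agents $2$ and $3$ by computer and then present the found instance with a hand-checkable case analysis grouped by agent~$1$'s first pick.
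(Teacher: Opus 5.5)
There is a genuine gap, and it is fatal to your plan. You never produce an instance, and for an existential statement the concrete instance and its verification are the whole proof. More seriously, the template you commit to cannot contain one. Your reduction is correct: with agents~2 and~3 unit-valued, the only possible violation is agent~1 envying agent~2. But agent~1 picks first in every round, and with three agents and eight goods such an agent is always EF1 towards agent~2 when her report is a best response. This holds for every $v_1$ and every $\widehat{\succ}_2,\widehat{\succ}_3$.

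To see this, let $y_1,y_2,y_3$ be agent~2's picks at turns $2,5,8$, and let $w_1$ be agent~3's pick at turn~3. Since $y_2,y_3$ were available at turn~3, agent~3 ranks $w_1$ above both. Now let agent~1 deviate by ranking $y_1$ first, followed by $y_2,y_3$. She takes $y_1$ at turn~1. Agents~2 and~3 then remove at most one of $y_2,y_3$ before turn~4: if agent~2 takes one of them at turn~2, then $w_1$ is still available at turn~3, so agent~3 does not take the other. Hence agent~1 obtains some $y_k$ with $k\in\{2,3\}$ at turn~4. Any two goods of a three-good bundle are worth at least the bundle minus its most valuable good, so the best-response condition gives
\[
v_1(A_1)\ge v_1(y_1)+v_1(y_k)\ge v_1(A_2)-\max_{g\in A_2}v_1(g).
\]
So no choice of the values $a,b,c$ or of the other two rankings, and no computer enumeration, can satisfy your step~2 and step~3 simultaneously.

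The missing idea is that the strategic agent must envy an agent who picks \emph{before} her, so that the envied agent's fixed report can defend the contested goods. The paper makes agent~2 the strategic agent, with values $8,8,8,5,5,5,2,2$ on $a,b,c,x,y,z,p,q$, and gives agents~1 and~3 unit values. Agent~1 reports $a\succ b\succ c\succ x\succ y\succ z\succ p\succ q$. She therefore takes $a$ at turn~1 and then whichever of $b,c$ agent~2 leaves, so agent~2 can get at most one of $a,b,c$. Agent~3 reports $p\succ x\succ y\succ q\succ z\succ a\succ b\succ c$. This ensures that any bundle of agent~2 containing one of $b,c$ contains at most one of $x,y,z$, because agent~3 and then agent~1 take the remaining two before agent~2's third turn. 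Thus agent~2 can obtain at most $15=v_2(\{x,y,z\})$, her equilibrium value, while $v_2(A_1\setminus\{g\})=16$ for every $g\in A_1=\{a,b,c\}$.
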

	
	\begin{proof}
		Let $M=\{a,b,c,x,y,z,p,q\}$, and let the picking sequence be
		$(1,2,3\mid 1,2,3\mid 1,2)$. Agents~1 and~3 value every good at~$1$.
		Agent~2's values are
		\[
		\begin{array}{c|cccccccc}
			&a&b&c&x&y&z&p&q\\ \hline
			v_2&8&8&8&5&5&5&2&2
		\end{array}
		\]
		and the reported rankings are
		\[
		\begin{aligned}
			\widehat{\succ}_1&:\quad
			a\succ b\succ c\succ x\succ y\succ z\succ p\succ q,\\
			\widehat{\succ}_2&:\quad
			x\succ y\succ z\succ a\succ b\succ c\succ p\succ q,\\
			\widehat{\succ}_3&:\quad
			p\succ x\succ y\succ q\succ z\succ a\succ b\succ c.
		\end{aligned}
		\]
		These are strict rankings, and can be implemented by strictly decreasing
		bids. Thus no tie-breaking is needed. The goods are picked in the order
		$a,x,p,b,y,q,c,z$, giving
		\[
		A_1=\{a,b,c\},\qquad A_2=\{x,y,z\},\qquad A_3=\{p,q\}.
		\]
		
		Agents~1 and~3 always receive three and two goods, respectively, so
		neither can improve her utility. Agent~2 receives value~$15$. To verify
		that she also has no profitable deviation, first observe that $a$ is
		always taken immediately. Moreover, she can receive at most one of $b,c$, and only on her first or second turn. Indeed, agent~3 never takes $b$ or $c$, since she ranks them last and at least three goods are available at each of her turns. So, if agent~2 takes one of $b,c$ on her first turn, agent~1 takes the other on her second turn; otherwise, agent~1 takes $b$ on her second turn and $c$ on her third turn, unless agent~2 has already taken it.

		Without $b,c$, agent~2 can obtain value at most~$15$. With one of $b,c$
		and at most one of $x,y,z$, her value is at most $8+5+2=15$.
		It remains to exclude a bundle containing one of $b,c$ and two of
		$x,y,z$. For such a bundle, her first two picks must consist of one
		good from each of these two sets. At that point, agent~1 has taken $a$
		and the other good from $\{b,c\}$, and agent~3 has taken $p$.
		Only $q$ and two goods from $\{x,y,z\}$ remain. At least one of these
		two goods is $x$ or $y$, which agent~3 ranks above $q$, so agent~3 takes
		one of them on her second turn. Agent~1 ranks the other above $q$ and
		takes it on her third turn. Agent~2 must therefore take $q$ on her
		third turn, a contradiction. This proves that the profile is a pure
		Nash equilibrium.
		
		Finally, $v_2(A_2)=15$, whereas
		$v_2(A_1\setminus\{g\})=16$ for every $g\in A_1$. Thus the outcome
		is not EF1.
	\end{proof}
	
	\begin{remark}
		The gap in the argument of \citet{ABFLLR24} (which also appears in~\citep{ABFLLR21}) occurs in the proof of their Lemma~2, at inequality~(2). Their recursive construction replaces $v_1$ by intermediate valuation functions $v_1^{r+1}$, but the best-response condition invoked at that point holds only for the original valuation function $v_1$. Preserving the total value of $A_1$ and the values of all goods outside $A_1$, as the construction does, does not preserve the values of deviating bundles that contain only some of the goods of $A_1$.
	\end{remark}

	\section{Equilibrium guarantees for recursively balanced sequences}
	\label{sec:prop1}
	
	Although EF1 can fail at equilibrium, every pure Nash equilibrium is still $1/2$-EF1 and PROP1. This remains true even when the order of the agents changes between rounds.
	
	\begin{theorem}\label{thm:guarantees}
		For nonnegative additive valuations, every pure Nash equilibrium of
		every fixed recursively balanced picking sequence mechanism induces
		an allocation that is $1/2$-EF1 and PROP1 with respect to the agents'
		true valuation functions.
	\end{theorem}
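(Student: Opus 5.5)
We prove the two guarantees separately. Fix a pure Nash equilibrium, an agent $i$, and the induced allocation $(A_1,\dots,A_n)$. Both arguments use only that $i$'s report is a best response to the fixed reports of the others. Let $k$ be the number of turns of $i$, so $|A_i|=k$, and every other agent $j$ has $|A_j|\le k+1$ by recursive balancedness.

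\emph{PROP1.} Let $T\subseteq M$ be the set of goods $g$ that $i$ values strictly more than every good of $M\setminus A_i$; thus $T\subseteq A_i$. Consider the deviation in which $i$ ranks the goods of $T$ first, in the order in which she picks them at equilibrium, and then ranks all remaining goods truthfully. The first claim is that this deviation still secures all of $T$. The goods of $T$ are taken by $i$ at equilibrium at some turns. Advancing them to her earliest turns can only free goods for the others earlier, but the others' rankings are fixed, so we need an exchange argument. We show by induction on the turns that, at each turn, the set of goods removed by other agents is the same as at equilibrium, or, if not, that goods of $T$ are never taken by others. I expect this step to need care, and it may be easier to use the deviation "pick $T$ in the original order first, then truthful" and compare with truthful picking over the residual set.

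After $T$ is secured, agent $i$ picks truthfully among $M'=M\setminus T$ for the remaining $k-|T|$ turns. Truthful picking in a recursively balanced sequence gives the standard round-by-round domination: her $r$th truthful pick is at least as good as any good another agent takes at its $(r+1)$th later turn. Summing, we get $v_i(\text{deviation bundle}) + \max_{g\notin A_i} v_i(g)\ge v_i(M)/n$. This uses the fact that every good of $T$ is worth at least the largest outside good, which covers the first-round loss. Since the deviation is not profitable, $v_i(A_i)$ is at least the deviation value, and adding the best good $g\in M\setminus A_i$ gives PROP1. When $M\setminus A_i=\emptyset$ the claim is trivial.

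\emph{$1/2$-EF1.} Fix $j\neq i$. Consider the deviation in which $i$ ranks the goods of $A_j$ first, in decreasing order of $v_i$, followed by the rest. We compare the two executions with a potential: at each time $t$, let $\Phi_t$ be the value to $i$ of what she has obtained in the deviating run, plus the value of what remains of $A_j$ that she can still get. We want to show $v_i(A_j\setminus\{g^*\})\le v_i(A_i)+v_i(D)\le 2v_i(A_i)$, where $g^*$ is $j$'s first good and $D$ is the deviation bundle. The key inequality is that, in the deviating run, every good of $A_j$ except one per lost turn is either obtained by $i$ or replaced, for the others, by a good of $A_i$. Each time another agent takes a good of $A_j$ in the deviating run, it must be because $i$'s pick changed what was available, so a good of $A_i$ was freed. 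This yields $v_i(A_j\setminus\{g^*\})\le v_i(D)+v_i(A_i)$. The main obstacle is making this "charging" rigorous via the potential; I would track the symmetric difference of the available sets between the two runs, which changes by at most one good per turn. Then $v_i(D)\le v_i(A_i)$ concludes.
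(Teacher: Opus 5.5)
Your PROP1 half follows the paper's route. Your $T$ is the paper's set of goods worth more than $h_i$, and the deviation and the round-by-round pairing are the same. However, you leave open the step that the deviation secures all of $T$. The invariant you float, that other agents remove the same goods as at equilibrium, is false in general: once $i$ grabs $g_1$ early, her original first pick is freed, others may take it, and their later picks shift. What works is the inclusion $S'_t\subseteq S_t\cup\{g_1,\dots,g_{r-1}\}$. Here $S_t$ and $S'_t$ are the goods allocated before turn $t$ in the equilibrium and deviating runs, and $r-1$ is the number of $i$'s turns before $t$. At a turn of $\ell\ne i$, her equilibrium pick $x$ is then still available in the deviation, so she takes $x$ or a good already in $S_t$. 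At equilibrium $i$ gets $g_r$ at her $r$th turn or later, so $g_r\notin S_t$ before that turn, and no one else takes it first. With this lemma the rest of your PROP1 argument goes through.

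The genuine gap is in the $1/2$-EF1 half. Your potential (``what $i$ has obtained plus what remains of $A_j$ that she can still get'') is not specified precisely enough to analyze turn by turn. The charging step, that every $A_j$-good lost to another agent is paid for by a freed good of $A_i$, carries no value comparison. Nothing relates $v_i$ of the freed good to $v_i$ of the lost one, and displacements propagate through chains of third agents, so you do not even have an injection. Thus $v_i(A_j\setminus\{g^*\})\le v_i(D_i)+v_i(A_i)$ is asserted, not derived.

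The missing idea is to charge the lost value twice to $i$'s own deviating picks, not to $A_i$. Since $i$ takes $A_j$ in nonincreasing order, after her $r$th turn every good of $A_j$ still available in the deviating run is worth at most $x_r$, where $x_r=v_i(d_r)$ if $d_r\in A_j$ and $0$ otherwise, with $x_0=h\ge x_1\ge\cdots$. With $R,S$ the remaining sets of the synchronized runs, the paper uses $\Phi=v_i(A_j\setminus R)+\sum_{g\in R\setminus S}\max\{\lambda,\mathbbm{1}_{A_j}(g)\,v_i(g)\}$ and lowers the level $\lambda$ to $x_r$ at $i$'s $r$th turn. A fixed strict ranking cannot pick two distinct goods that both lie in $R\cap S$, which gives three bounds:
\begin{itemize}
\item A third agent's turn never increases $\Phi$, because whenever a residual good worth $\lambda$ is added, one worth at least $\lambda$ is removed.
\item Agent $i$'s $r$th turn increases $\Phi$ by at most $x_r$.
\item Agent $j$'s $s$th turn increases $\Phi$ by at most the current level, which is at most $x_{s-1}$ by recursive balance.
\end{itemize}
Summing gives $v_i(A_j)\le h+2\sum_r x_r\le h+2v_i(D_i)\le h+2v_i(A_i)$. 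Merely tracking the size of the symmetric difference, as you suggest, cannot give this. The weights $\max\{\lambda,\cdot\}$ are what make third-party turns neutral and tie $j$'s gains to the levels.
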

	
	We prove the two guarantees separately. The first proposition gives
	a quantitative proportionality bound. The second gives the pairwise
	bundle value comparison needed for $1/2$-EF1.
	
	\begin{proposition}\label{prop:prop1}
		Let $A=(A_1,\dots,A_n)$ be the allocation induced by a report profile for a fixed recursively balanced picking
		sequence in which agent $i$'s report is a best response
		to the other reports. Let
		$
		h_i=\max_{g\in M\setminus A_i}v_i(g),
		$
		where the maximum is defined as zero if $M\setminus A_i$ is empty. Then
		$
		v_i(A_i)+\frac{n-1}{n}h_i\ge \frac{v_i(M)}{n}.
		$
	\end{proposition}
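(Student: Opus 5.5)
The plan is to compare the equilibrium bundle $A_i$ with the bundle $D$ that agent $i$ would get under one particular deviation. Since $\widehat{\succ}_i$ is a best response, $v_i(A_i)\ge v_i(D)$, so it suffices to show $v_i(D)+\frac{n-1}{n}h_i\ge v_i(M)/n$. If $M\setminus A_i=\emptyset$, then $A_i=M$ and there is nothing to prove, so assume $h_i$ is attained by some good outside $A_i$. Let $T=\{g\in M: v_i(g)>h_i\}$. By the definition of $h_i$ we have $T\subseteq A_i$, and every good of $M\setminus T$ has value at most $h_i$. Let $t=|T|$.

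The deviation is as follows. Agent $i$ reports the goods of $T$ first, ordered by the time at which she picked them in the equilibrium execution. She then ranks all remaining goods truthfully, in decreasing order of $v_i$ with ties broken arbitrarily. The first step is to show that under this report she still obtains all of $T$, in fact at her first $t$ turns.
\begin{itemize}
\item Others' reports are fixed, so another agent $j$ takes a good $g\in T$ only if, at that turn, $g$ is the top available good in $\widehat{\succ}_j$.
\item I would argue by induction over the turns of the execution, maintaining the invariant that whenever agent $i$ has secured the first $k$ goods of $T$, each remaining good of $T$ is still available, and is picked no later than its original pick time.
\item Because agent $i$ now takes the goods of $T$ weakly earlier than before, the set of goods available to the others at any moment differs from the equilibrium execution only by goods of $T$ having disappeared earlier and goods outside $T$ having disappeared later.
\item So if some $j$ took a $T$-good in the deviation, one must trace back to a contradiction with the equilibrium execution, where that good survived until agent $i$ took it.
\end{itemize}
This monotonicity or coupling argument is the step I expect to be the main obstacle, because other agents' choices can change once the availability sets differ. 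If it fails as stated, I would instead let agent $i$ copy her original report up to the turn where she took her last $T$-good, which reproduces the original execution exactly up to that point. I would then argue that her non-$T$ picks in that prefix are at least as good as what she would have gotten anyway.

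Given that agent $i$ gets $T$ at her first $t$ turns, the rest is a counting argument of the type used for truthful recursively balanced sequences in Section~\ref{sec:setting}. Consider the goods $M\setminus T$, all of value at most $h_i$.
\begin{itemize}
\item From her $(t+1)$st turn onward she picks truthfully among the available goods.
\item Because the sequence is recursively balanced, her pick at each later turn is at least as valuable to her as any good an other agent picks at that agent's next turn.
\item Group the goods of $M\setminus T$ picked by others into blocks of $n-1$ and match each block with the good agent $i$ picks in the preceding round.
\item The only unmatched goods are those taken by the other $n-1$ agents before her first truthful pick, at most $n-1$ goods per earlier round. The $T$-goods she took in those rounds absorb all but the goods taken in one round: each $T$-good exceeds $h_i$ and so dominates each of at most $n-1$ such goods.
\end{itemize}
This yields $n\,v_i(D)\ge v_i(M)-(n-1)h_i$, which is the claim.

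To close, I would check the edge cases. The first is the incomplete last round, where agent $i$ may not appear at all. The second is $t=0$, where the argument reduces to the truthful EF1 argument with the first-round goods of others bounded by $h_i$. In each case the matching leaves at most $n-1$ unmatched goods, each of value at most $h_i$.
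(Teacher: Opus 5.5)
Your plan is the paper's proof. Your $T$ is the paper's $H$, and the deviation (promote $T$ in acquisition order, then pick truthfully) is the same. Your round-by-round matching, which pairs the others' round-$(r+1)$ picks with $d_r$ and bounds the first-round picks by $h_i$, is the paper's per-agent pairing $v_i(y_{r+1})\le v_i(d_r)$ summed over $j\ne i$. The counting, including the edge cases, is fine.

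The one incomplete point is the step you flag as the main obstacle, which is the paper's Lemma~\ref{lem:promotion}. The invariant you state is the right one: goods allocated in the deviation form a subset of goods allocated in the original, together with the $T$-goods $i$ has already taken. But your ``Because agent $i$ now takes the goods of $T$ weakly earlier\dots'' is exactly what has to be proved. The missing line is short.

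\emph{Turn of $j\ne i$.} Let $x$ be $j$'s original pick. It is not yet allocated in the original, and it is not a $T$-good, since $x\notin A_i$. By the invariant, $x$ is still available in the deviation. So $j$ takes $x$ or a good she ranks above $x$. Any such good must already have been allocated in the original execution, or $j$ would have taken it there. Thus other agents' picks can change, but only to goods already gone in the original, which preserves the invariant.

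\emph{No other agent takes a $T$-good.} The $T$-goods allocated before this turn in the original were taken by $i$ at her earlier turns, in acquisition order. They are therefore among the goods she has already secured in the deviation.

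\emph{Turn of $i$.} At her $r$th turn, $g_r$ is not yet allocated in the original and is not among $g_1,\dots,g_{r-1}$. So it is available in the deviation, and she takes it.

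Your fallback is therefore unnecessary, and as stated it would not work. Strategic non-$T$ picks in the copied prefix can be worth far less than $h_i$. They then need not dominate the other agents' next-round picks, which your counting requires.
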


	A natural approach would be to compare $A_i$ with the bundle that agent~$i$ would get by reporting truthfully, as in the standard proof that recursively balanced picking sequences produce EF1 allocations under truthful reports. However, a truthful report may cost agent~$i$ some of her most valuable goods. Instead, we let her first secure all goods that she values more than every good she does not receive in $A_i$, and only then pick truthfully. The next lemma shows that such a deviation is always possible.
	
	\begin{lemma}\label{lem:promotion}
		Fix a picking sequence and the reports of the other agents. Let $g_1,\dots,g_c$ be any distinct goods that
		agent~$i$ receives under report $\widehat{\succ}_i$, listed in the order she acquires
		them, and let $\widehat{\succ}'_i$ be any report ranking $g_1,\dots,g_c$ first, in
		this order. Then under $\widehat{\succ}'_i$ agent~$i$ receives $g_r$ at her $r$th
		turn, for every $r\le c$.
	\end{lemma}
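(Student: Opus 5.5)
I would prove the statement by induction on $r$, comparing the two executions, under $\widehat{\succ}_i$ and under $\widehat{\succ}'_i$, turn by turn. Let $t_1<t_2<\dots$ be the turns of agent~$i$ in the fixed picking sequence. Under $\widehat{\succ}_i$, let $s_r$ denote the index of the turn at which $g_r$ is acquired. Since the $g_r$ are listed in order of acquisition, we have $s_1<s_2<\dots<s_c$, and hence $s_r\ge r$. The claim to establish by induction is slightly stronger than stated: for every $r\le c$, up to turn $t_r$ of the new execution, agent~$i$ has received exactly $g_1,\dots,g_r$, and every good taken by other agents before $t_r$ in the new execution was also taken by some other agent before $t_{s_r}$ in the original execution.

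The key observation is a monotonicity property. Suppose that at every step the set of goods available in the new execution contains all goods that were available at the corresponding step of the original execution, except for goods that agent~$i$ has already taken. Suppose also that the goods agent~$i$ has taken in the new execution are among the $g$'s, all of which she receives eventually in the original execution. Then each other agent $j$, picking her top available good according to her fixed report $\widehat{\succ}_j$, takes in the new execution either the same good as in the original one or a good that in the original execution was already gone. It cannot take any $g_k$ that is still needed: $g_k$ was available in the original execution at least until turn $t_{s_k}$, and the goods removed from the pool are ranked by~$j$ exactly as before.

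I would formalize this as follows. The new execution processes the turns of the other agents in the same order, while agent~$i$ ``front-loads'' her acquisitions. The inductive step argues that at turn $t_r$, the good $g_r$ is still available in the new execution. Indeed, $g_r$ survived the other agents' picks up to $t_{s_r}\ge t_r$ in the original execution. So if any other agent $j$ took $g_r$ at some turn $u<t_r$ in the new execution, then $g_r$ was $j$'s top available good at turn $u$. One then needs that the new available set at turn $u$ is a \emph{superset} of the original available set at turn $u$, minus the earlier $g$'s. Given this, $g_r$ would have been available to $j$ at turn $u$ in the original execution as well, and every good $j$ prefers to it was gone. That contradicts $g_r$ surviving in the original execution, unless $j$ picked something better there. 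Since $\widehat{\succ}'_i$ ranks $g_1,\dots,g_c$ first and $g_r$ is available at $t_r$, agent~$i$ then picks $g_r$.

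The main obstacle is that superset claim. Front-loading means agent~$i$ takes fewer ``non-$g$'' goods early, which could leave extra goods available to others, and the goods others take may therefore differ. To handle this, I would compare the sets of goods removed by turn $u$ via an exchange argument. For each other agent, I would show by induction over all turns that the set of goods taken by others in the new execution up to $u$ is contained in the set taken by others or by $i$ (non-$g$ goods) in the original execution up to $u$. This gives the required containment for the $g$'s: no remaining $g_k$ is taken by others earlier than in the original execution. I expect this bookkeeping to be the delicate part.
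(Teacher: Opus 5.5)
Your approach is the paper's: couple the two executions turn by turn and maintain, up to agent~$i$'s $c$th turn, the invariant $S'_t\subseteq S_t\cup\{g_1,\dots,g_{r-1}\}$. Here $S_t,S'_t$ are the goods allocated before turn $t$ under $\widehat{\succ}_i$ and $\widehat{\succ}'_i$, and $r-1$ is the number of $i$'s earlier turns. This is exactly your ``superset'' claim. Two corrections are needed, though.

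First, the strengthened claim in your first paragraph is false as worded. Front-loading can free a non-promoted good that $i$ herself took originally, and another agent may then take it. Take six goods and the sequence $(i,j,i,j,i,j)$. Let $i$ rank $a\succ g_1\succ g_2\succ\cdots$, and let $j$ rank $a$ first and $g_1,g_2$ last. Originally $i$ takes $a,g_1,g_2$. Under $\widehat{\succ}'_i$, agent $j$ takes $a$ at turn $2<t_2$, although no other agent took $a$ originally. The version in your last paragraph (``taken by others or by $i$ (non-$g$ goods)'') is the correct one, and it is equivalent to the paper's invariant.

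Second, that invariant needs no exchange argument or delicate bookkeeping: your ``key observation'' is already its inductive step. At a turn of $j\ne i$, her original pick $x$ lies outside $S_t$ and is not a promoted good, so by the invariant it is still available in the new run. Hence $j$ takes $x$ or a good she ranks above $x$, and such a good must lie in $S_t$, or $j$ would have taken it originally. At a turn of $i$, her new pick $g_r$ is covered by the enlarged promoted set $\{g_1,\dots,g_r\}$. With the invariant in hand, availability of $g_r$ at $t_r$ is immediate. Indeed, $s_r\ge r$ gives $g_r\notin S_{t_r}$, and $g_r\notin\{g_1,\dots,g_{r-1}\}$. So the contradiction argument in your third paragraph, which leaves the case ``$j$ picked something better there'' only sketched, is unnecessary.
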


	\begin{proof}
		If $c=0$, the statement is vacuous. Suppose $c\ge1$.
		Let $S_t$ and $S'_t$ be the sets of goods allocated before turn $t$ under
		$\widehat{\succ}_i$ and $\widehat{\succ}'_i$. We show by induction on $t$, up to agent~$i$'s
		$c$th turn, that she has received $g_1,\dots,g_{r-1}$ at her turns before
		$t$ and that $S'_t\subseteq S_t\cup\{g_1,\dots,g_{r-1}\}$, where $r-1$ is
		the number of her turns before $t$. Initially $S_1=S'_1=\emptyset$,
		so the invariant holds. Assume it holds immediately before turn $t$.
		
		If turn $t$ is agent~$i$'s, then $g_r\notin S_t$, since under $\widehat{\succ}_i$
		she acquires $g_r$ at her $r$th turn or later, and
		$g_r\notin\{g_1,\dots,g_{r-1}\}$. So $g_r$ is available, and as
		$g_1,\dots,g_{r-1}$ are gone, it is the highest-ranked available good of
		$\widehat{\succ}'_i$; she takes it. Adding $g_r$ to $S'_t$ and to the list
		of promoted goods preserves the inclusion at the next turn.
		If turn $t$ belongs to $j\ne i$, let $x$ be
		$j$'s pick under $\widehat{\succ}_i$. Then $x\notin S_t$ and
		$x\notin\{g_1,\dots,g_{r-1}\}$, since these goods are allocated to
		agent~$i$ under $\widehat{\succ}_i$. The induction hypothesis therefore implies
		that $x$ is available under $\widehat{\succ}'_i$. Hence $j$ takes $x$ or a good
		ranked above $x$. Any such higher-ranked good belongs to $S_t$, since
		otherwise $j$ would have taken it under $\widehat{\succ}_i$. Thus the new pick
		belongs to $S_{t+1}$, preserving the invariant.
	\end{proof}
	
	We now prove the quantitative bound. 
	
	\begin{proof}[Proof of Proposition~\ref{prop:prop1}]
		Let $H=\{g\in M: v_i(g)>h_i\}$. By the definition of $h_i$,
		$H\subseteq A_i$, and every good outside $H$ is worth at most $h_i$ to
		agent~$i$. Let
		$g_1,\dots,g_c$ be the goods of $H$ in the order agent~$i$ acquires them.
		Recursive balance gives two facts: every agent's $r$th turn lies in round
		$r$, and the numbers of turns of any two agents differ by at most one.
		
		To exploit the best-response condition, we consider a 
		unilateral deviation by agent~$i$, while keeping all other reports fixed.
		We construct this deviation so that, denoting its resulting allocation by
		$(D_1,\dots,D_n)$, all but one good of every $D_j$ can be paired with
		distinct goods of $D_i$ that agent~$i$ values at least as much. The
		remaining good is worth at most $h_i$. This gives a lower bound on
		$v_i(D_i)$; because the original report is a best response, agent~$i$
		values her actual bundle $A_i$ at least as much as $D_i$.

		Consider the report $\widehat{\succ}'_i$ ranking $g_1,\dots,g_c$ first, in this
		order, followed by the remaining goods in nonincreasing order of $v_i$,
		breaking ties lexicographically, and
		denote the resulting allocation by $(D_1,\dots,D_n)$. By
		Lemma~\ref{lem:promotion}, agent~$i$ receives $g_r$ at her $r$th turn
		for $r\le c$; at every later turn all of $H$ is gone, so she takes a
		most valuable available good. Let $d_r$ be her $r$th pick.
		
		Fix $j\ne i$. We pair every good of $D_j$ except its first with an
		earlier pick of agent~$i$. If $D_j=\emptyset$, then
		$v_i(D_j)\le v_i(D_i)+h_i$ is immediate. Otherwise, list the goods of
		$D_j$ in allocation order as $y_1,\dots,y_q$, where
		$q\le|D_i|+1$. We claim $v_i(d_r)\ge v_i(y_{r+1})$ for
		$1\le r\le q-1$. If $r\le c$, then $d_r=g_r$ is worth more than $h_i$,
		while $y_{r+1}\notin H$ is worth at most $h_i$. If $r>c$, then $y_{r+1}$ is
		taken in round $r+1$, hence is still available at agent~$i$'s $r$th
		turn, where she takes a most valuable available good. Therefore
		$v_i(d_r)\ge v_i(y_{r+1})$ in this case as well.
		
		Since $y_1\notin H$, we have $v_i(y_1)\le h_i$, and thus
		\[
		\begin{aligned}
			v_i(D_j)
			&= v_i(y_1)+\sum_{r=1}^{q-1}v_i(y_{r+1})\\
			&\le h_i+\sum_{r=1}^{q-1}v_i(d_r)\\
			&\le h_i+v_i(D_i).
		\end{aligned}
		\]
		Here, the first inequality uses
		$v_i(y_{r+1})\le v_i(d_r)$ for every $r\le q-1$. For the second
		inequality, recall that $q-1\le |D_i|$, so
		$d_1,\dots,d_{q-1}$ form a subset of $D_i$; the inequality therefore
		follows from the nonnegativity of the valuations.

		Since
		$(D_1,\dots,D_n)$ partitions $M$,
		\[
		v_i(M)
		=v_i(D_i)+\sum_{j\ne i}v_i(D_j)
		\le n\,v_i(D_i)+(n-1)h_i.
		\]
		
		Hence, 
		\[v_i(D_i)+\frac{n-1}{n}h_i\ge v_i(M)/n \,.\]
		Finally, $v_i(A_i)\geq v_i(D_i)$ because the original report is a best
		response. Hence, 	\[v_i(A_i)+ \frac{n-1}{n}h_i\geq  v_i(M)/n \,.\qedhere\]
	\end{proof}

	For the pairwise guarantee, we show $\tfrac 12$-EF1 by considering the deviation in which agent~$i$ ranks the goods of $A_j$ first, from the most to the least valuable to her. The difficulty is that this deviation may change the picks of the other agents, and such changes may propagate until the end of the execution. To control them, we compare the two executions turn by turn, using a potential function that can increase only at the turns of agents~$i$ and~$j$. Roughly speaking, each good of $A_j$ that agent~$i$ takes in the deviation pays for at most two turns of agent~$j$ in the original execution, while the first turn of agent~$j$ is paid for by a most valuable good of $A_j$, i.e., by the good that is removed in the EF1 comparison. This is where the factor $1/2$ comes from.

	\begin{proposition}\label{prop:half-ef1}
		Fix a recursively balanced picking sequence independently of the
		reports. If agent $i$'s report is a best response to the other reports
		and the resulting allocation is $A=(A_1,\dots,A_n)$, then, for every
		$j\ne i$ with $A_j\ne\emptyset$,
		\[
		v_i(A_i)\ge\frac12\Big(v_i(A_j)-\max_{g\in A_j}v_i(g)\Big).
		\]
	\end{proposition}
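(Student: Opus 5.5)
We plan to follow the strategy sketched just before the statement. Fix $j\ne i$ with $A_j\ne\emptyset$, and let $\widehat{\succ}'_i$ be the report that ranks the goods of $A_j$ first, in nonincreasing order of $v_i$ (ties lexicographically), followed by the remaining goods in any order. Let $D=(D_1,\dots,D_n)$ be the resulting allocation. Since $\widehat{\succ}_i$ is a best response, $v_i(A_i)\ge v_i(D_i)\ge v_i(D_i\cap A_j)$. It therefore suffices to show
\[
2\,v_i(D_i\cap A_j)\ge v_i(A_j)-\max_{g\in A_j}v_i(g).
\]
Write the goods of $A_j$ as $a_1,\dots,a_k$ in nonincreasing $v_i$ order. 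We will show that agent~$i$ obtains, in the deviation, enough of these goods with large enough values. Concretely, if she takes at her $r$th turn a good of $A_j$ whose value dominates both $a_{2r}$ and $a_{2r+1}$, summing over her turns gives the bound, since $a_1$ is the good removed in the EF1 comparison.

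To compare the two executions, we run both in parallel turn by turn. At each time $t$ we let $U_t$ be the set of goods of $A_j$ still available in the deviation. The potential is defined from the number of goods of $A_j$ that, in the original execution, agent~$j$ has not yet picked but that are already unavailable in the deviation. The key claim is a ``lag'' invariant in the spirit of Lemma~\ref{lem:promotion}. Turns of agents other than $i$ and $j$ can only behave consistently: the set of goods allocated in the deviation is contained in the originally allocated set plus the goods of $A_j$ taken so far by $i$ or $j$. Consequently, a third agent $\ell$ never takes a good of $A_j$ in the deviation before $j$ would have taken it originally. We prove this by induction, exactly as in Lemma~\ref{lem:promotion}. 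If $\ell$'s original pick $x$ is available, then $\ell$ takes $x$ or an already-displaced good. If $x$ is unavailable, it was removed by $i$ or $j$, and the potential accounts for this.

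The accounting is as follows. Goods of $A_j$ leave the deviation pool only through turns of $i$, which collect them directly, and turns of $j$ and displaced third agents, which are charged to the potential. Each turn of $i$ that takes a good of $A_j$ will be shown to pay for at most two turns of $j$ in the original execution. One unit comes from the turn of $j$ in the same round. The other comes from a cascade: the displaced agent now takes one extra good, which pushes the effect forward by at most one further good. Recursive balance ensures that in each round $i$ and $j$ each have one turn, and the incomplete last round differs by at most one turn. The first turn of $j$ may precede any turn of $i$, and this unpaid turn is charged to $a_1$. Because $i$ picks the goods of $A_j$ greedily by value, the $r$th good she takes is worth at least every good of $A_j$ still unpaid at that time. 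This yields the pairing: her $r$th good dominates $a_{2r}$ and $a_{2r+1}$, i.e., at most $2r+1$ goods of $A_j$ are gone by her $r$th turn.

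The main obstacle is controlling how the deviation propagates through third agents. A third agent who loses her intended good to $i$ might take a good of $A_j$ early, or might grab a good that $j$ would have taken and thereby push $j$ onto $A_j$-goods out of order. We would first try to prove the concrete invariant that, by the end of round $r$, at most $2r+1$ goods of $A_j$ are unavailable in the deviation, of which at least $r$ (or all remaining) are held by $i$. The proof would induct over turns, with a potential $\Phi_t=$ (number of goods of $A_j$ removed by agents other than $i$ in the deviation) $-$ (number of $j$-turns in the original execution so far). We would show that $\Phi_t$ increases only at turns of $i$ or $j$, and by at most one per such turn.
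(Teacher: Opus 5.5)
Your deviation and reduction are the right ones, and the pairing step is sound in principle. Suppose at most $2r-1$ goods of $A_j$ are unavailable in the deviation immediately before $i$'s $r$th turn. Then, for $2r\le|A_j|$, her greedy pick satisfies $v_i(d_r)\ge v_i(a_{2r})\ge v_i(a_{2r+1})$. Since $|A_j|\le k+1\le 2k+1$ (with $k$ her number of turns), summing gives $v_i(A_j)-v_i(a_1)\le 2v_i(D_i)\le 2v_i(A_i)$.

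\textbf{The gap.} The gap is that this count is never established, and the tools you propose for it do not work.

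Your stated thresholds (``at most $2r+1$ gone by her $r$th turn'', ``at most $2r+1$ by the end of round $r$'') are too weak. With $2r$ goods gone before her turn, all of $a_1,\dots,a_{2r}$ may be gone. An end-of-round bound also says nothing about goods removed in round $r$ before $i$ moves.

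Your intermediate claim, that a third agent never takes a good of $A_j$ before $j$ would have taken it originally, is false. It also contradicts your own ``main obstacle'' paragraph. Take Round-Robin with order $(i,j,\ell)$, goods $a_1,a_2,y_1,y_2,e_1,e_2$, and reports $i\colon a_1\succ a_2\succ\cdots$, $j\colon y_1\succ e_1\succ y_2\succ\cdots$, $\ell\colon e_1\succ y_2\succ e_2\succ\cdots$. The original picks are $a_1,y_1,e_1,a_2,y_2,e_2$, so $A_j=\{y_1,y_2\}$. If $v_i(y_1)>v_i(y_2)$, the deviation gives $y_1$ to $i$, $e_1$ to $j$, and $y_2$ to $\ell$, all in round~1.

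Your potential $\Phi_t$ also lacks the claimed monotonicity. It never changes at turns of $i$. It cannot increase at turns of $j$, since the $-1$ offsets any gain. It increases exactly at third-agent turns, such as $\ell$'s turn above. A potential that only counts removed goods cannot be controlled locally, because the cause (a pick of $i$) and the effect (a third agent grabbing a good of $A_j$) can be many turns apart.

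\textbf{The missing idea.} You need a term that stores the displacement while it propagates. Run both executions synchronously, let $R$ and $S$ be the remaining goods in the original and in the deviation, and set $\Psi=|A_j\setminus R|+|R\setminus S|$.
\begin{itemize}
\item At a turn of $\ell\ne i$ with picks $u$ and $z$: if $u\in S$, then $z=u$ or $z\notin R$, since the same strict ranking cannot prefer $u$ to $z$ and $z$ to $u$ when both lie in $R\cap S$. So $R\setminus S$ is unchanged. If $u\notin S$, then $u$ leaves $R\setminus S$ and at most $z$ enters.
\item At a turn of $i$, her original pick is not in $A_j$, and at most $d_r$ enters $R\setminus S$.
\end{itemize}
Hence $\Psi$ grows by at most one at each turn of $i$ or $j$, and never at other turns. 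Since $A_j\setminus S\subseteq(A_j\setminus R)\cup(R\setminus S)$, and by recursive balance $j$ has at most $r$ turns before $i$'s $r$th turn, at most $(r-1)+r=2r-1$ goods of $A_j$ are gone at that moment. This is exactly what your pairing needs.

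This is the unit-weight version of the paper's potential. The paper weights each term by $v_i$, keeping residual goods at least at the decreasing level $\lambda=x_r$, and obtains $v_i(A_j)\le h+2\sum_r x_r$ directly. With your order-statistics pairing, the plain count suffices.
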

	\begin{proof}
		Fix $j\ne i$ with $A_j\ne\emptyset$, and set
		$h=\max_{g\in A_j}v_i(g)$. Suppose that agent $i$ deviates to a strict ranking
		$\widehat{\succ}'_i$ that places all goods of $A_j$ first in
		nonincreasing order of $v_i$, breaking ties lexicographically, followed
		by the goods outside $A_j$ in any strict order.
		We denote the deviating allocation by $D=(D_1,\dots,D_n)$.

		Let $k$ be the number of turns of $i$, and let $d_r$ be her deviating
		pick on her $r$th turn, for $1\le r\le k$. Define
		\[
		x_0=h,\qquad
		x_r=\begin{cases}v_i(d_r),&d_r\in A_j,\\0,&d_r\notin A_j.\end{cases}
		\]
		Whenever a good of $A_j$ is available, $i$ takes a most valuable such
		good. If she takes a good outside $A_j$, no good of $A_j$ remains.
		Since available sets only shrink, it follows that
		\[
		x_0\ge x_1\ge\cdots\ge x_k\ge0.
		\]
		After her $r$th turn, every remaining good of $A_j$ has adjusted value at most
		$x_r$. Additivity and nonnegativity also give
		\begin{equation}\label{eq:targetdeviationvalue}
			v_i(D_i)=\sum_{r=1}^k v_i(d_r)\ge\sum_{r=1}^k x_r.
		\end{equation}
		These statements also apply when $k=0$, with empty sums.
		
		Both executions use the same picking sequence and allocate one good
		at each turn. We synchronize them and write $R,S$ for the respective
		sets of remaining goods in the original and deviating executions,
		respectively. For a nonnegative \textit{level} $\lambda$, define
		\begin{equation}\label{eq:localpotential}
			\Phi(R,S,\lambda)
			=v_i(A_j\setminus R)+
			\sum_{g\in R\setminus S}\max\{\lambda, \mathbbm{1}_{A_j}(g)\cdot v_i(g) \}.
		\end{equation}
		The first term records the value allocated to $j$ in the original
		execution. The second term assigns a \emph{residual value} to every good that is still available in the original execution but not in the deviating one. The level $\lambda$ ensures that each such good has residual value at least $\lambda$, even if it is not a good of $A_j$.

		Initially set $\lambda=x_0=h$. Immediately before $i$'s $r$th turn,
		change the level to $x_r$, and retain this level until her next turn.
		Every level change is a decrease, so it does not  increase $\Phi$ right at that moment. 
		Initially $R=S=M$, hence $\Phi=0$. Below, we bound the change of $\Phi$ at each turn, for the level that is in effect at that turn. \medskip

		\noindent\emph{A turn of agent $i$:}
		Let $a_r$ be her original pick. Since $j\ne i$, we have
		$a_r\notin A_j$, so the first term of the potential does not change.
		The new residual value set is
		\[
		((R\setminus S)\setminus\{a_r\})
		\cup (\{d_r\}\cap(R\setminus\{a_r\})).
		\]
		Thus a residual value term may be removed, and at most one is added, for $d_r$ (only if it remains available in the original execution, i.e., if $d_r\in R\setminus\{a_r\}$).
		Because $\mathbbm{1}_{A_j}(d_r)\cdot v_i(d_r)=x_r=\lambda$, the added residual value equals $x_r$.
		This turn therefore increases $\Phi$ by at most $x_r$. \medskip
		
		\noindent\emph{A turn of an agent $\ell\ne i$:}
		Let $u\in R$ and $z\in S$ be her original and deviating picks.
		Every  good $g\in R\cap S$ satisfies $\mathbbm{1}_{A_j}(g)\cdot v_i(g)\le\lambda$.
		Before $i$'s first turn this follows from $\lambda=h$; subsequently,
		it follows from the bound on goods remaining after her most recent
		deviating pick.
		
		If $u\ne z$, the two goods cannot both be in $R\cap S$; otherwise the
		unchanged strict ranking $\widehat{\succ}_\ell$ would have to rank
		$u$ above $z$ in the original execution and $z$ above $u$ in the
		deviating execution. 
		So, if $u\in R\cap S$, then either $u=z$ or $z\in S\setminus R$; in both cases the set $R\setminus S$ does not change. If $u\in R\setminus S$, then the residual value of $u$ is removed, and $z\ne u$, as $u\notin S$. In this case, no residual value is added if $z\in S\setminus R$, whereas a residual value of $\max\{\mathbbm{1}_{A_j}(z)\cdot v_i(z),\lambda\}=\lambda$ is added if $z\in R\cap S$.
		
		If $\ell\ne j$, then $u\notin A_j$. The first term of $\Phi$ is unchanged.
		In the case where $u \in R\cap S$, the residual values are unchanged, whereas in the
		case where $u\in R\setminus S$, a residual value of $\lambda$ is removed and at most a residual value of $\lambda$ is added. Thus this turn does not increase $\Phi$.
		
		If $\ell=j$, then $u\in A_j$ and the first term increases by
		$v_i(u)$. 
		When $u \in R\cap S$, this is at most $\lambda$, and
		residual values are unchanged. Otherwise (i.e., when $u\in R\setminus S$), by the discussion above, the total increase is at most
		\[
		v_i(u)-\max\{v_i(u),\lambda\}+\lambda\le\lambda.
		\]
		Hence, every turn of $j$ increases $\Phi$ by at most its current level. \medskip

		Now we may bound the final value of  $\Phi$ by summing over turns.
		Let $q$ be the total number of turns of $j$; by recursive balance, we have $q\le k+1$. Before $j$'s $s$th turn, suppose that $i$ has taken $p$ turns; again, by recursive balance, we have  $p\ge s-1$.
		The current level is $x_p$, and therefore
		\[
		\lambda=x_p\le x_{s-1}.
		\]
		Note that this includes $s=1$, as $x_0=h$.
		
		At termination $R=S=\emptyset$, so the final potential is $v_i(A_j)$.
		Summing the turn increases, including the nonpositive level changes,
		and then using $q-1\le k$, we obtain
		\begin{align*}
			v_i(A_j) &= \Phi(\emptyset, \emptyset, x_k)\\
			&\le \Phi(M, M, x_0) + \sum_{r=1}^k x_r+\sum_{s=1}^q x_{s-1}\\
			&=\sum_{r=1}^k x_r+h+\sum_{r=1}^{q-1}x_r \le h+2\sum_{r=1}^k x_r\\
			&\le h+2v_i(D_i) \le h+2v_i(A_i).
		\end{align*}
		The second to last inequality is \eqref{eq:targetdeviationvalue}, whereas
		the last one uses the best-response condition for $\widehat{\succ}'_i$.
		Rearranging the terms gives the claimed inequality.
	\end{proof}
	
	We are now ready to prove Theorem~\ref{thm:guarantees} using the two propositions above.

	\begin{proof}[Proof of Theorem~\ref{thm:guarantees}]
		Let $A$ be an allocation induced by a pure Nash equilibrium. Every agent is a best responder,
		so Proposition~\ref{prop:half-ef1} applies to every ordered pair of
		distinct agents. For $A_j\ne\emptyset$, choose a good $g\in A_j$
		maximizing $v_i(g)$. By additivity,
		\[
		v_i(A_i)\ge\frac12\left(v_i(A_j)-v_i(g)\right)
		=\frac12v_i(A_j\setminus\{g\}).
		\]
		Thus $A$ is $1/2$-EF1.
		
		For PROP1, fix $i$ and use $h_i$ from Proposition~\ref{prop:prop1}, i.e., $h_i=\max_{g\in M\setminus A_i}v_i(g)$.
		If $v_i(A_i)\ge v_i(M)/n$, proportionality already holds; otherwise,
		$M\setminus A_i$ must be nonempty. Consider a good $g\in M\setminus A_i$
		with $v_i(g)=h_i$. Proposition~\ref{prop:prop1} and nonnegativity give
		\[
		v_i(A_i\cup\{g\})=v_i(A_i)+h_i
		\ge v_i(A_i)+\frac{n-1}{n}h_i
		\ge\frac{v_i(M)}n.
		\]
		Thus $A$ is also PROP1.
	\end{proof}

	\begin{corollary}\label{cor:two-agents}
		For two agents with nonnegative additive valuations, every pure Nash
		equilibrium of every fixed recursively balanced picking sequence mechanism
		induces an EF1 allocation.
	\end{corollary}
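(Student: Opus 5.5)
The plan is to derive the corollary directly from Proposition~\ref{prop:prop1}, without going through the weaker pairwise bound of Proposition~\ref{prop:half-ef1}. With $n=2$ and an agent $i$ whose opponent is $j$, we have $M\setminus A_i=A_j$. So the quantity $h_i$ in Proposition~\ref{prop:prop1} equals $\max_{g\in A_j}v_i(g)$, or zero if $A_j=\emptyset$. In that empty case EF1 toward $j$ is vacuous, so we may assume $A_j\neq\emptyset$.

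Fix a pure Nash equilibrium. Every agent's report is then a best response, so Proposition~\ref{prop:prop1} applies to agent~$i$ and gives
\[
v_i(A_i)+\tfrac12 h_i\ge \tfrac12 v_i(M)=\tfrac12\bigl(v_i(A_i)+v_i(A_j)\bigr).
\]
Here we use additivity and the fact that $(A_1,A_2)$ partitions $M$. Multiplying by $2$ and rearranging yields
\[
v_i(A_i)\ge v_i(A_j)-h_i.
\]
Now choose $g\in A_j$ with $v_i(g)=h_i$. Additivity then gives $v_i(A_i)\ge v_i(A_j\setminus\{g\})$, which is exactly EF1 from $i$ toward $j$. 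Applying the same argument to both agents completes the proof.

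There is no real obstacle here. The only point to check is that the $\frac{n-1}{n}$ factor in Proposition~\ref{prop:prop1} becomes exactly $\frac12$ when $n=2$. This is precisely what makes the proportionality bound coincide with EF1 for two agents. It is also why the corollary does not contradict Proposition~\ref{prop:non-ef1}, whose counterexample requires three agents.
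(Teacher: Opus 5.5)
Your proof is correct and follows the paper's argument: both apply Proposition~\ref{prop:prop1} with $n=2$, use $M\setminus A_i=A_j$ so that $h_i$ is the value of a most valuable good of $A_j$, and rearrange $2v_i(A_i)+h_i\ge v_i(A_i)+v_i(A_j)$ into $v_i(A_i)\ge v_i(A_j\setminus\{g\})$. Your handling of the empty-bundle case and the remark that Proposition~\ref{prop:half-ef1} is not needed match the paper as well.
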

	\begin{proof}
		For $j\ne i$, Proposition~\ref{prop:prop1} with $n=2$ gives
		$2v_i(A_i)+h_i\ge v_i(M)=v_i(A_i)+v_i(A_j)$.
		Since $M\setminus A_i=A_j$, choosing a most valuable good of
		a nonempty $A_j$ proves EF1.
	\end{proof}
	For Round-Robin, this was already known from~\citep[Theorem~3]{ABFLLR24}.

	\paragraph{Approximate equilibria.}
	The proofs above also give bounds when best responses are approximate, using the exact same analysis.
	For $\beta\in(0,1]$, suppose that the report profile is a \emph{$\beta$-approximate pure Nash equilibrium}, i.e., every agent $i$ satisfies
	$v_i(A_i)\ge\beta v_i(D_i)$ for every unilateral deviation with
	resulting allocation $D$. The same deviations then give
	\[
	v_i(A_i)\ge\frac{\beta}{2}
	\Big(v_i(A_j)-\max_{g\in A_j}v_i(g)\Big)\,,
	\quad\text{for }j\ne i\text{ with }A_j\ne\emptyset,
	\]
	and
	\[
	v_i(A_i)+\beta\frac{n-1}{n}h_i\ge\beta\frac{v_i(M)}n.
	\]
	Thus the allocation is $\beta/2$-EF1, and each agent can reach
	at least $\beta v_i(M)/n$ by adding at most one outside good.
	For Round-Robin, the pairwise guarantee is already known due to~\citep[Theorem~8]{ABLLR25}.\medskip

	Our counterexample separates exact EF1 from the two equilibrium
	guarantees of Theorem~\ref{thm:guarantees}. EF1 may fail, whereas $1/2$-EF1 and PROP1 hold for every
	pure Nash equilibrium of every fixed recursively balanced mechanism.
	There also always exists an EF1 pure Nash equilibrium: the bluff
	profile reproduces the truthful allocation~\citep[Theorem~2]{AGW17}.%

	Theorem~2 of \citet{AGW17} is stated for positive values that induce strict rankings, but it extends to our setting. Refine true-value ties lexicographically and let $r_i(g)$
	be a good's rank, with rank~1 denoting the most valuable good.
	For $\delta>0$, replace $v_i(g)$ by
	$v_i(g)+\delta(m+1-r_i(g))$. The true strict rankings, truthful
	allocation, and bluff profile are unchanged. The profile is a pure Nash equilibrium
	for every $\delta>0$; taking $\delta\downarrow0$ in each best-response
	inequality proves that it is a pure Nash equilibrium for the original values as well. For Round-Robin, this was also shown in~\citep[Theorem~10]{ABFLLR24}.

	\section{An upper bound of \texorpdfstring{$2/3$}{2/3} for Round-Robin}
	\label{sec:upper-bound}

	The counterexample in Proposition~\ref{prop:non-ef1} already gives
	an upper bound of $15/16$: agent~2 receives value $15$, while every
	deletion from agent~1's bundle leaves value $16$. This ratio is close
	to one. The example establishes the failure of exact EF1 with very
	few goods, but leaves open how much envy an equilibrium can sustain.

	We now improve this upper bound to $2/3$.
	The construction has two principal agents, agent~1 and agent~2.
	Agent~2 is the agent whose envy we examine; she receives goods that
	she values at $2$, whereas agent~1 receives goods that agent~2 values
	at $3$. Agent~1 picks before agent~2 in every round and ranks her own
	goods first. So, if agent~2 takes some of these goods, then agent~1
	runs out of them earlier and starts taking the goods of agent~2.
	At the same time, each of the remaining agents, whom we call
	\emph{auxiliary}, removes a specific good of agent~2 by a
	prescribed deadline, unless agent~2 has already taken it. Together,
	these two effects force agent~2 to lose at least three of her own
	goods for every two goods of agent~1 that she takes. Therefore, no
	deviation is profitable for her.

	The auxiliary agents implement the deadlines through fixed reports,
	while goods that only they value fill their remaining turns without
	disturbing the intended allocation. Giving unit valuations to all
	agents other than agent~2 makes their equilibrium conditions
	immediate. This leaves a single best-response condition to verify,
	namely that of agent~2.

	\begin{theorem}\label{thm:upper-bound}
		For every integer $k\ge4$, there is an additive instance with $k+1$
		agents and $k(k+1)$ goods, and a pure Nash equilibrium of Round-Robin
		whose allocation is not $\alpha$-EF1 for any
		\[
		\alpha\in\Big(\frac{2k}{3(k-1)},1\Big].
		\]
		Consequently, it is not possible to guarantee $\alpha$-EF1 for every pure Nash equilibrium of Round-Robin for any constant  $\alpha >2/3$.
	\end{theorem}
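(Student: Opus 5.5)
The plan is an explicit construction with $k+1$ agents and $k$ complete Round-Robin rounds, so every agent receives exactly $k$ goods. I will then check the equilibrium conditions, which reduce to agent~2's. The goods are $a_1,\dots,a_k$ (intended for agent~1), $b_1,\dots,b_k$ (intended for agent~2), and $k(k-1)$ filler goods $f^{\ell}_1,\dots,f^{\ell}_k$, one block for each auxiliary agent $\ell\in\{3,\dots,k+1\}$.

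\textbf{Valuations.} Agents $1,3,\dots,k+1$ value every good at $1$. Agent~2 has $v_2(a_t)=3$, $v_2(b_t)=2$ and $v_2(f^\ell_s)=0$. If the intended allocation is realized, then $v_2(A_2)=2k$, while $v_2(A_1\setminus\{g\})=3(k-1)$ for every $g\in A_1$. Hence the allocation fails $\alpha$-EF1 exactly when $2k<3\alpha(k-1)$, that is, when $\alpha>\frac{2k}{3(k-1)}$. This is the stated interval, which is nonempty for $k\ge4$. Letting $k\to\infty$ then gives the ``consequently'' part.

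\textbf{Reports.} Agent~1 ranks $a_1\succ\dots\succ a_k$ first and then the $b$'s. Because she picks before agent~2 in every round, any $a$ that agent~2 steals makes her run out of $a$'s one round earlier, after which she starts taking $b$'s. Agent~2 reports $b_1\succ\dots\succ b_k$ first. Each auxiliary agent $\ell$ ranks some of her own fillers first, then one designated good $b_{\tau(\ell)}$, then her remaining fillers, and all other goods last. She therefore takes $b_{\tau(\ell)}$ at a prescribed round (her deadline) unless agent~2 already holds it. The deadlines are chosen so that in the intended run agent~2 takes $b_r$ in round $r$, strictly before its deadline; a natural first attempt is to make $b_{r+1}$ due around round $r+1$, and I would tune this. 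Nobody but its owner ever ranks a filler high enough to take it, so fillers never interfere. Running the profile, the intended allocation results. All agents except agent~2 always get $k$ unit-valued goods, so they have no profitable deviation.

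\textbf{Agent~2's best response (main obstacle).} Consider an arbitrary deviation, and let $s$ be the number of $a$'s agent~2 obtains and $t$ the number of $b$'s she obtains. Taking fillers is never useful, so she gets value $3s+2t$ and needs $3s+2t\le 2k$; equivalently, she must lose at least $\lceil 3s/2\rceil$ of her $b$'s. There are two sources of loss.
\begin{itemize}
\item[(i)] Each stolen $a$ occupies one of her $k$ turns and pushes agent~1's switch to the $b$'s one round earlier. Agent~1 therefore removes about $s$ of the $b$'s at the end.
\item[(ii)] Each round spent on an $a$ before some $b_r$ has been secured delays her $b$-picks. This makes roughly one further $b$ miss its auxiliary deadline for every two $a$'s taken.
\end{itemize}
I would formalize this by an exchange/counting argument over the rounds. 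First, normalize the deviation so that agent~2 takes the $b$'s she gets in index order and as early as possible (in the spirit of Lemma~\ref{lem:promotion}). Then show by induction on rounds that after agent~2 has taken $s'$ goods from $\{a_1,\dots,a_k\}$, at least $\lceil 3s'/2\rceil$ of the $b$'s are already lost or doomed. The delicate part is calibrating the deadlines $\tau$ so that the ratio is exactly $3$ losses per $2$ steals for every interleaving pattern, in particular for early versus late steals. I would first check small $k$ (such as $k=4,5$) by hand to fix the deadlines, and then prove the general count.
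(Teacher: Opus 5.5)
\textbf{Verdict: genuine gap.} Your construction is the paper's construction: agent~1 ranks the $a$'s and then the $b$'s, agent~2 ranks the $b$'s, auxiliary agents enforce deadlines, and every agent except agent~2 has unit values. Your EF1 computation, the check that the interval is nonempty for $k\ge4$, and the equilibrium conditions for agents other than~2 are all fine. The gap is that the only nontrivial part of the theorem is not proved, namely that agent~2 has no profitable deviation. The deadlines $\tau$ are left to be tuned. The invariant ``after $s'$ steals, $\lceil 3s'/2\rceil$ of the $b$'s are lost or doomed'' is only asserted.

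Your heuristic split is also a poor guide for that induction, because it does not match what happens. Put the deadline of $b_t$ at round $t$, and let agent~2 steal in rounds $1,\dots,s$ with $s$ even; she then gets exactly $2k$. In this deviation the auxiliaries take $b_1,\dots,b_s$, while agent~1 takes only $s/2$ of the $b$'s at the end, which is the reverse of your (i)/(ii).

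Two smaller points. The normalization is unsupported: Lemma~\ref{lem:promotion} only lets you promote goods in the order agent~2 actually acquires them, not in index order. And the deadline property must hold for every report of agent~2. An auxiliary agent takes $b_{\tau(\ell)}$ \emph{by} her deadline, and possibly earlier if agent~2 steals her fillers. So ``at a prescribed round'' is not the right statement.

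The missing argument avoids analyzing interleavings altogether.
\begin{itemize}
\item Set the deadline of $b_t$ at round $t$ for $t\le k-1$: agent $t+2$ ranks $t-1$ of her fillers, then $b_t$.
\item Under every report of agent~2, $b_t$ is gone by the end of round $t$. If it is still available at agent $t+2$'s turn in round $t$, her $t-1$ earlier picks were exactly the fillers ranked above it, so she takes $b_t$ then.
\item Take a deviation with $s\ge1$ steals, and let $\alpha$ be the number of $a$'s agent~1 gets, so $\alpha+s\le k$. Agent~1 takes $a$'s exactly in rounds $1,\dots,\alpha$. Since she picks before agent~2, all steals occur in these rounds, so agent~2 gets at most $\alpha-s$ of the $b$'s there.
\item After round $\alpha\le k-1$, only $b_{\alpha+1},\dots,b_k$ can remain. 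Whenever agent~2 takes one of them, agent~1 has just taken a different one in the same round. So agent~2 gets at most $(k-\alpha)/2$ of them.
\end{itemize}
Hence $v_2(D_2)\le 3s+2(\alpha-s)+(k-\alpha)=k+\alpha+s\le 2k$.
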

	
	\begin{proof}
		We begin by describing the instance. There are $n=k+1$ agents: the
		principal agents $1$ and $2$, and the auxiliary agents
		$3,\dots,k+1$, who pick in this order in every round. The set of goods is
		$M=H\cup G\cup Z_1\cup\dots\cup Z_{k-1}$, where
		\[
		H=\{h_1,\dots,h_k\},\qquad G=\{g_1,\dots,g_k\},\qquad
		Z_t=\{z_{t,1},\dots,z_{t,k}\}\ \ \text{for } 1\le t\le k-1,
		\]
		and all these sets are pairwise disjoint. The goods of $Z_t$ are meant
		for agent $t+2$. We have $m=2k+(k-1)k=kn$, so there are exactly $k$
		complete rounds and every agent receives exactly $k$ goods, regardless
		of the reports. The valuation function of agent~2 is additive with
		\[
		v_2(g)=\begin{cases}
			2,&g\in H,\\
			3,&g\in G,\\
			0,&\text{otherwise,}
		\end{cases}
		\]
		whereas every other agent values every good at $1$.
		
		For the reports of the agents, we first fix an arbitrary strict
		ranking of $M$. Every ranking displayed below is completed by
		appending all unlisted goods to its right, in this fixed order.
		The reports are
		\[
		\begin{aligned}
			\widehat{\succ}_1&:\quad
			g_1\succ\cdots\succ g_k\succ h_1\succ\cdots\succ h_k\succ\cdots,\\
			\widehat{\succ}_2&:\quad
			h_1\succ\cdots\succ h_k\succ\cdots,\\
			\widehat{\succ}_{t+2}&:\quad
			z_{t,1}\succ\cdots\succ z_{t,t-1}\succ h_t
			\succ z_{t,t}\succ\cdots\succ z_{t,k}\succ\cdots,
			\qquad 1\le t\le k-1,
		\end{aligned}
		\]
		where, for $t=1$, no good of $Z_1$ precedes $h_1$ in agent~3's ranking. These are strict
		rankings and, as discussed in Section~\ref{sec:setting}, they can be
		implemented by bid vectors.\medskip
		
		We first argue about the resulting allocation. 
		We claim that, in every round $r$, agent~1 takes $g_r$, agent~2 takes
		$h_r$, and agent $t+2$ takes $z_{t,r}$, for every $1\le t\le k-1$.
		We use induction on $r$. Suppose that the claim holds for all rounds
		before round $r$; for $r=1$ this assumption is vacuous. Then the
		goods allocated before round $r$ are exactly $g_s$, $h_s$, and
		$z_{t,s}$, for $s<r$ and $1\le t\le k-1$. In particular, $g_r$ and
		$h_r$ are available at the beginning of round $r$, whereas all goods
		that precede them in $\widehat{\succ}_1$ and $\widehat{\succ}_2$,
		respectively, have been allocated. Hence agent~1 takes $g_r$ and
		agent~2 takes $h_r$. Now consider agent $t+2$. Since the agents before
		her in round $r$ have taken $g_r$, $h_r$, and goods of
		$Z_1,\dots,Z_{t-1}$, the goods of $Z_t$ that are not available at her
		turn are exactly $z_{t,1},\dots,z_{t,r-1}$; in particular, $z_{t,r}$ is
		available. If $r<t$, the goods that she ranks above $z_{t,r}$ are
		$z_{t,1},\dots,z_{t,r-1}$. If $r\ge t$, these are
		$z_{t,1},\dots,z_{t,r-1}$ and $h_t$, where $h_t$ has been taken by
		agent~2 in round $t$. In both cases, all goods that she ranks above
		$z_{t,r}$ are unavailable, so she takes $z_{t,r}$. This completes the
		induction. Therefore,
		\[
		A_1=G,\qquad A_2=H,\qquad A_{t+2}=Z_t\ \ \text{for } 1\le t\le k-1,
		\]
		and, in particular, $v_2(A_2)=2k$.\medskip
		
		Next, we show that the profile above is a pure Nash equilibrium.
		Every agent other than agent~2 receives $k$ goods of value $1$ to her.
		As every agent receives exactly $k$ goods under every report profile,
		this is the maximum value she can obtain, and thus her report is a
		best response. It remains to show that $\widehat{\succ}_2$ is a best
		response for agent~2. To this end, consider any report of agent~2,
		keep the reports of all other agents fixed, and let
		$D=(D_1,\dots,D_n)$ be the resulting allocation. We are going to show
		that $v_2(D_2)\le 2k$.
		
		We begin by showing that, for every $1\le t\le k-1$, the good $h_t$ is
		allocated by the end of round $t$. If $h_t$ is not available at the
		turn of agent $t+2$ in round $t$, then it has already been allocated.
		So, suppose that $h_t$ is available at that turn. Then $h_t$ was also
		available at each of her $t-1$ earlier turns, so at each of these
		turns she took a good that she ranks above $h_t$, i.e., one of
		$z_{t,1},\dots,z_{t,t-1}$. These $t-1$ distinct picks exhaust
		$\{z_{t,1},\dots,z_{t,t-1}\}$, and therefore $h_t$ is the
		highest-ranked available good of agent $t+2$ in round $t$; hence, she takes it. Note that
		this argument applies to every report of agent~2, including reports
		under which she takes goods of $Z_t$ in order to change the later
		picks of agent $t+2$.

		If $|D_2\cap G|=0$, then $v_2(D_2)=2|D_2\cap H|\le 2k$.
		So, assume that $|D_2\cap G|\ge1$. 
		As agent~1 ranks
		the goods of $G$ first, at each of her turns she takes a good of $G$
		if and only if such a good is available. Since the set of available
		goods only shrinks, the rounds in which a good of $G$ is available at
		agent~1's turn are the first few rounds. As agent~1 has one turn per
		round and receives $|D_1\cap G|$ goods of $G$, these are exactly the rounds
		$1,\dots,|D_1\cap G|$. Moreover, as agent~1 picks before agent~2 in every round,
		if agent~2 takes a good of $G$ in some round $r$, then this good was
		available at agent~1's turn in round $r$, and thus $r\le |D_1\cap G|$. Since agents~1 and~2 receive distinct
		goods of $G$, we have $|D_1\cap G|+|D_2\cap G|\le k$; in particular, $|D_1\cap G|\le k-1$.
		
		We now bound $|D_2\cap H|$. In rounds $1,\dots,|D_1\cap G|$, agent~2 has
		exactly $|D_1\cap G|$ turns, and, by the above, she takes all $|D_2\cap G|$ goods of
		$D_2\cap G$ at these turns. So, she receives at most $|D_1\cap G|-|D_2\cap G|$ goods of
		$H$ in these rounds. Next, consider rounds
		$|D_1\cap G|+1,\dots,k$. As $|D_1\cap G|\le k-1$, the goods $h_1,\dots,h_{|D_1\cap G|}$ have been
		allocated by the end of round $|D_1\cap G|$, and thus every good of $H$
		allocated in these rounds belongs to $\{h_{|D_1\cap G|+1},\dots,h_k\}$. Suppose
		that agent~2 takes some $h\in H$ in a round $r\ge |D_1\cap G|+1$. Then $h$ was
		available at agent~1's turn in round $r$, whereas no good of $G$ was,
		since $r>|D_1\cap G|$. As agent~1 ranks the goods of $H$ right after those of
		$G$, she took a good of $H$ at that turn, which is different from
		$h$, since $h$ was still available afterwards. That is, in every round
		$r\ge |D_1\cap G|+1$ in which agent~2 receives a good of $H$, agent~1 also
		receives a good of $H$. As every agent receives one good per round, in
		rounds $|D_1\cap G|+1,\dots,k$ agent~1 receives at least as many goods of $H$ as
		agent~2 does. All these goods are distinct and belong to
		$\{h_{|D_1\cap G|+1},\dots,h_k\}$, so agent~2 receives at most $(k-|D_1\cap G|)/2$ goods
		of $H$ in these rounds. Combining the above, we get
		\begin{align*}
			v_2(D_2) &=3|D_2\cap G|+2|D_2\cap H|\\
			&\le 3|D_2\cap G|+2(|D_1\cap G|-|D_2\cap G|)+(k-|D_1\cap G|)\\
			&=k+|D_1\cap G|+|D_2\cap G|\le 2k,
		\end{align*}
		where the last inequality follows from $|D_1\cap G|+|D_2\cap G|\le k$.
		
		Since $v_2(A_2)=2k$, the report $\widehat{\succ}_2$ is a best response
		for agent~2. Together with the best responses of the other agents,
		this shows that the profile is a pure Nash equilibrium.\medskip

		Finally, we show that the allocation $(A_1,\ldots,A_n)$ fails to be $\alpha$-EF1 for sufficiently large $\alpha$. Every good of $A_1=G$ is worth $3$ to agent~2. Thus, for every
		$g\in A_1$, we have $v_2(A_1\setminus\{g\})=3(k-1)$, while
		$v_2(A_2)=2k$. That is, the allocation is not $\alpha$-EF1 for any
		$\alpha>2k/(3(k-1))$, and, since $k\ge4$, we have $2k/(3(k-1))<1$.
		Note that
		\[
		\frac{2k}{3(k-1)}=\frac23+\frac{2}{3(k-1)}.
		\]
		So, given any $\alpha\in(2/3,1]$, we may choose an integer $k\ge4$ so that
		$2k/(3(k-1))<\alpha$, and the corresponding pure Nash equilibrium is
		not $\alpha$-EF1.
	\end{proof}
	
	Note that the upper bound also holds for strictly positive valuations. For each
	$k$, we may increase the value of agent~2 for each good by $1/k$, leaving
	the reports and all other valuations unchanged. Since agent~2 receives
	exactly $k$ goods under every report profile, the value of every
	bundle she may receive increases by exactly $1$, and thus the profile
	remains an equilibrium. The EF1 comparison now gives
	\[
	\frac{2k+1}{(3+1/k)(k-1)}\longrightarrow\frac23,
	\]
	as $k\to\infty$. Hence, even strictly positive instances rule out every factor strictly greater than $2/3$.

	\section{Discussion and open problems}\label{sec:discussion}
	
	Our results leave a gap between the guaranteed factor of $1/2$ and the
	upper bound of $2/3$ established here with respect to EF1. To state the question precisely,
	let $\alpha_{\mathrm{RR}}$ be the largest factor guaranteed at every
	pure Nash equilibrium of every Round-Robin instance, and let $\alpha_{\mathrm{RB}}$
	be the corresponding factor for all fixed recursively balanced
	sequences. Since Round-Robin belongs to the latter class,
	\[
	\frac12\le\alpha_{\mathrm{RB}}
	\le\alpha_{\mathrm{RR}}\le\frac23.
	\]
	Our central open question is to determine these factors. In particular,
	does allowing the order to change between rounds affect the sharp
	guarantee? The lower-bound proof uses only that no agent is more
	than one turn ahead of another, whereas the upper-bound construction
	uses the same order in every round. 
	
	The dependence on the number of agents is another natural question.
	For two agents, the quantitative proportionality bound gives exact
	EF1 (Corollary~\ref{cor:two-agents}). The small counterexample shows that this fails already for
	three agents, but the family approaching $2/3$ uses a growing number
	of auxiliary agents. Determining the best guaranteed factor for each fixed
	number of agents could distinguish which part of the gap comes
	from additional strategic interaction and which part comes from
	the number of rounds.
	
	There is also a useful distinction between equilibrium existence
	and a guarantee for every equilibrium. The bluff profile provides
	an EF1 equilibrium, while the examples show that other equilibria
	can sustain envy after every deletion. It would be interesting to
	identify conditions on reports or valuations that force all
	equilibria to satisfy a stronger guarantee, without restricting
	the agents to truthful reports.

\section*{Acknowledgments}
Rebecca Reiffenh{\"a}user was partially supported by the NWO grant \emph{Fair Allocation for Strategic Agents (FAStA)}, grant ID https://doi.org/10.61686/AZZOJ38639.
	
	\bibliographystyle{plainnat}
	\bibliography{abb,RR-EF1}

@article{KoCh71a,
	author = {D. A. Kohler and R. Chandrasekaran},
	journal = {Operations Research},
	number = {2},
	pages = {270--277},
	title = {A Class of Sequential Games},
	volume = {19},
	year = {1971}}

@inproceedings{ABLM17a,
  author       = {Haris Aziz and
                  Sylvain Bouveret and
                  J{\'{e}}r{\^{o}}me Lang and
                  Simon Mackenzie},
  xeditor       = {Satinder Singh and
                  Shaul Markovitch},
  title        = {Complexity of Manipulating Sequential Allocation},
  booktitle    = {Proceedings of the Thirty-First {AAAI} Conference on Artificial Intelligence, {AAAI} 2017},
  pages        = {328--334},
  publisher    = {{AAAI} Press},
  year         = {2017},
}

@inproceedings{AWX15b,
	author = {H. Aziz and T. Walsh and L. Xia},
	booktitle = proc # {24th} # ijcai,
	pages = {468--474},
	title = {Possible and Necessary Allocations via Sequential Mechanisms},
	year = {2015}}

@inproceedings{BoLa11a,
	author = {S. Bouveret and J. Lang},
	booktitle = proc # {22nd} # ijcai,
	pages = {73--78},
	publisher = {AAAI Press},
	title = {A General Elicitation-Free Protocol for Allocating Indivisible Goods},
	year = {2011}}

@inproceedings{ABFLLR21,
  author    = {Amanatidis, Georgios and Birmpas, Georgios and Fusco, Federico and
               Lazos, Philip and Leonardi, Stefano and Reiffenh{\"a}user, Rebecca},
  title     = {Allocating Indivisible Goods to Strategic Agents: Pure {N}ash
               Equilibria and Fairness},
  booktitle = {Proceedings of the 17th International Conference on Web and
               Internet Economics (WINE)},
  series    = {Lecture Notes in Computer Science},
  volume    = {13112},
  pages     = {149--166},
  publisher = {Springer},
  year      = {2021}
}

@article{ABFLLR24,
  author  = {Amanatidis, Georgios and Birmpas, Georgios and Fusco, Federico and
             Lazos, Philip and Leonardi, Stefano and Reiffenh{\"a}user, Rebecca},
  title   = {Allocating Indivisible Goods to Strategic Agents: Pure {N}ash
             Equilibria and Fairness},
  journal = {Mathematics of Operations Research},
  volume  = {49},
  number  = {4},
  pages   = {2425--2445},
  year    = {2024},
}

@article{CKMPSW19,
  author  = {Caragiannis, Ioannis and Kurokawa, David and Moulin, Herv{\'e} and
             Procaccia, Ariel D. and Shah, Nisarg and Wang, Junxing},
  title   = {The Unreasonable Fairness of Maximum {N}ash Welfare},
  journal = {ACM Transactions on Economics and Computation},
  volume  = {7},
  number  = {3},
  pages   = {12:1--12:32},
  year    = {2019}
}

@article{ABLLR25,
  author       = {Georgios Amanatidis and
                  Georgios Birmpas and
                  Philip Lazos and
                  Stefano Leonardi and
                  Rebecca Reiffenh{\"{a}}user},
  title        = {Round-Robin Beyond Additive Agents: Existence and Fairness of Approximate
                  Equilibria},
  journal      = {Mathematics of Operations Research},
  volume       = {50},
  number       = {4},
  pages        = {2717--2737},
  year         = {2025}
}

@inproceedings{AGW17,
  author       = {Haris Aziz and
                  Paul Goldberg and
                  Toby Walsh},
  xeditor       = {J{\"{o}}rg Rothe},
  title        = {Equilibria in Sequential Allocation},
  booktitle    = {Proceedings of the 5th International Conference on Algorithmic Decision Theory, {ADT}
                  2017},
  series       = {Lecture Notes in Computer Science},
  volume       = {10576},
  pages        = {270--283},
  publisher    = {Springer},
  year         = {2017}
}

@STRING{proc = {Proceedings of the }}

@STRING{ijcai = { International Joint Conference on Artificial Intelligence (IJCAI)}}

@STRING{aaai = { AAAI Conference on Artificial Intelligence (AAAI)}}

@STRING{wine = { International Workshop on Internet and Network Economics (WINE)}}

@STRING{adt = { International Conference on Algorithmic Decision Theory (ADT)}}

@STRING{springer = {Springer-Verlag}}

@STRING{acm = {ACM Press}}

\end{document}